\documentclass[a4paper,11pt]{article}
\usepackage{amsmath,amssymb,amsthm,mathtools}
\usepackage{microtype,aliascnt}
\usepackage{enumitem,booktabs,array,needspace,placeins}
\usepackage{tikz}
\usepackage{authblk}
\usepackage{cite}
\usetikzlibrary{arrows.meta,positioning,fit,calc,backgrounds}
\definecolor{UCTblue}{HTML}{2864A0}
\definecolor{UCTteal}{HTML}{087F8C}
\definecolor{UCTamber}{HTML}{B97912}
\definecolor{UCTpurple}{HTML}{7554A3}
\definecolor{UCTred}{HTML}{B4434F}
\definecolor{UCTgray}{HTML}{65717D}
\definecolor{LFourBlue}{HTML}{28669B}
\definecolor{LFourOrange}{HTML}{AD650F}
\definecolor{LFourInk}{HTML}{263342}
\definecolor{LFourGray}{HTML}{545C66}
\tikzset{
  >=Stealth,
  uctvertex/.style={circle,draw,thick,minimum size=7mm,inner sep=1pt,font=\small},
  uctcenter/.style={uctvertex,draw=UCTblue,fill=UCTblue!9,text=UCTblue},
  uctchoice/.style={uctvertex,draw=UCTteal,fill=UCTteal!9,text=UCTteal},
  uctbox/.style={draw,thick,rounded corners=1mm,inner sep=2mm,align=center,font=\small},
  uctcore/.style={uctbox,draw=UCTblue,fill=UCTblue!5,text=UCTblue},
  uctcode/.style={uctbox,draw=UCTpurple,fill=UCTpurple!7,text=UCTpurple},
  uctjoin/.style={draw=UCTgray,line width=1pt},
  uctarrow/.style={->,draw=UCTgray,thick},
  uctlabel/.style={font=\small,align=center}
}
\usepackage{geometry}
\usepackage[hidelinks]{hyperref}
\usepackage{cleveref}
\hypersetup{pdftitle={W[1]-Hardness of Upper Clique Transversal: A Proof by Direct Construction}}
\theoremstyle{definition}
\newtheorem{theorem}{Theorem}
\newaliascnt{lemma}{theorem}
\newtheorem{lemma}[lemma]{Lemma}
\aliascntresetthe{lemma}
\crefname{theorem}{theorem}{theorems}
\Crefname{theorem}{Theorem}{Theorems}
\crefname{lemma}{lemma}{lemmas}
\Crefname{lemma}{Lemma}{Lemmas}
\crefname{figure}{figure}{figures}
\Crefname{figure}{Figure}{Figures}
\crefname{table}{table}{tables}
\Crefname{table}{Table}{Tables}
\crefname{equation}{equation}{equations}
\Crefname{equation}{Equation}{Equations}
\crefname{section}{section}{sections}
\Crefname{section}{Section}{Sections}
\renewcommand{\proofname}{Proof}
\makeatletter
\renewenvironment{proof}[1][\proofname]{\par
  \pushQED{\qed}%
  \normalfont\topsep6\p@\@plus6\p@\relax
  \trivlist
  \item[\hskip\labelsep\itshape\textcolor{black}{#1}\@addpunct{.}]\ignorespaces
}{%
  \popQED\endtrivlist\@endpefalse
}
\makeatother
\newcommand{\UCT}{\textnormal{\textsc{Upper Clique Transversal}}}
\newcommand{\MCC}{\textnormal{\textsc{Multicolored Clique}}}
\newcommand{\mcK}{\mathcal K}

\setlist{topsep=3pt,itemsep=2pt,parsep=0pt}
\title{W[1]-Hardness of Upper Clique Transversal\thanks{This work is partially supported by the Slovenian Research and Innovation Agency (I0-0035, research program P1-0285 and research projects J1-60012, J1-70035, J1-70046, and N1-0370), and by
JSPS KAKENHI Grant Numbers 
JP21K11757, 
JP22H00513, 
JP25K03077, 
and JST, CRONOS, Japan Grant Number JPMJCS24K2.}}

\author[1]{J.~Pascal Gollin}
\author[2]{Tesshu Hanaka}
\author[3]{Ekkehard K{\"o}hler}
\author[1,4]{Martin Milani{\v{c}}}
\author[5]{Yushi Uno}

\affil[1]{FAMNIT, University of Primorska, Slovenia}
\affil[2]{Kyushu University, Japan}
\affil[3]{Brandenburg University of Technology, Germany}
\affil[4]{IAM, University of Primorska, Slovenia}
\affil[5]{Osaka Metropolitan University, Japan}
\date{}

\date{}

\begin{document}

\maketitle

\thispagestyle{plain}

\begin{abstract}
A clique transversal of a graph is a set of vertices intersecting every maximal clique.
We prove that deciding whether a graph has an inclusion-wise minimal clique transversal of size at least $k$ is \textsf{W[1]}-hard when parameterized by $k$. 
\end{abstract}

\section{Introduction}

All graphs considered in this note are finite, simple, and undirected.
For a graph $G$, let $\mcK(G)$ be the set of all maximal cliques in $G$. 
A \emph{clique transversal} is a set $S\subseteq V(G)$ that intersects every $K\in\mcK(G)$, and it is \emph{minimal} if it is inclusion-wise minimal. 
In {the context of} \UCT{}, given a graph $G$ and a positive integer $k$, we ask whether $G$ has a minimal clique transversal of size at least $k$~\cite{MilanicUno2023}. 
Milani\v{c} and Uno~\cite{MilanicUno2023} proved \textsf{NP}-completeness of the problem on chordal graphs and linear-time solvability on split graphs and proper interval graphs. 
Jaffke, de Lima, and Nikabadi~\cite{JaffkeEtAl2026} gave polynomial-time algorithms for the class of interval graphs and, more generally, rooted directed path graphs.

\UCT{} is one of many \emph{max-min} (or \emph{upper}) problems, which maximize an inclusion-wise minimal feasible solution; similarly, \emph{min-max} problems minimize an inclusion-wise maximal one. 
Examples of max-min problems include \textsc{Upper Domination}~\cite{BazganEtAl2018}, \textsc{Upper Total Domination}~\cite{BazganEtAl2018}, \textsc{Maximum Minimal Vertex Cover}~\cite{Zehavi2017}, \textsc{Upper Edge Domination}~\cite{MonnotEtAl2021}, \textsc{Maximum Minimal Feedback Vertex Set}~\cite{DubloisEtAl2022FVS}, \textsc{Maximum Minimal Separator}~\cite{HanakaBZO19}, and \textsc{Largest Bond}~\cite{DuarteEtAl2021}.
Examples of min-max problems include \textsc{Minimum Maximal Matching}~\cite{YannakakisGavril1980} and \textsc{Minimum Maximal Independent Set}~\cite{Haviland91} (or, equivalently, \textsc{Independent Domination}~\cite{GoddardH13}).

A general problem in this family is \textsc{Maximum Minimal Hitting Set}: given a hypergraph $\mathcal H=(U,\mathcal F)$ and a positive integer $k$, determine whether $\mathcal H$ has an inclusion-wise minimal hitting set of size at least $k$~\cite{BazganEtAl2018,AraujoEtAl2023}. 
Parameterized by $k$, the problem is \textsf{W[1]}-hard and in~\textsf{XP}~\cite{BazganEtAl2018,BlaesiusEtAl2022,AraujoEtAl2023}; it is \textsf{FPT} parameterized by $k$ plus the maximum hyperedge size~\cite{AraujoEtAl2023}. 
For the clique hypergraph $\mathcal H_G=(V(G),\mcK(G))$, minimal hitting sets are exactly minimal clique transversals, so \UCT{} is \textsc{Maximum Minimal Hitting Set} restricted to clique hypergraphs implicitly represented by graphs. 
This representation is essential since $\mcK(G)$ may be exponential.

For \UCT{}, Boros et al.~\cite{BorosEtAl2025} gave an $n^{\mathcal{O}(k)}$-time algorithm and asked whether the problem is \textsf{W[1]}-hard parameterized by $k$. 
We answer this question affirmatively by a parameterized reduction from \MCC{}.
This result parallels analogous results for \textsc{Upper Domination} and \textsc{Upper Total Domination}, see~\cite{BazganEtAl2018}.

\Needspace{6\baselineskip}

\section{W[1]-Hardness}

For a positive integer $k$, we denote by $[k]$ the set $\{i\in \mathbb{Z}\colon 1\le i\le k\}$.
For a clique transversal $S$ and a vertex $s\in S$, a maximal clique $K\in\mcK(G)$ with $K\cap S=\{s\}$ is called a \emph{private clique} of $s$ with respect to $S$.

\begin{lemma}[folklore; see, e.g.,~\cite{BlaesiusEtAl2022} for a more general statement]\label{lem:private}
A clique transversal $S$ is minimal if and only if every vertex $s\in S$ has a private clique with respect to $S$.
\end{lemma}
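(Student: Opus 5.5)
The plan is to prove both directions directly from the definitions, using only that $S$ is a clique transversal, i.e., $K\cap S\neq\emptyset$ for every $K\in\mcK(G)$. The central observation is that, for $s\in S$, the set $S\setminus\{s\}$ is a clique transversal exactly when no maximal clique meets $S$ only in $s$. Indeed, a maximal clique $K$ is missed by $S\setminus\{s\}$ if and only if $K\cap S\subseteq\{s\}$. Since $K\cap S\neq\emptyset$, this is equivalent to $K\cap S=\{s\}$, which says precisely that $K$ is a private clique of $s$ with respect to $S$.

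For the forward direction, assume $S$ is minimal and let $s\in S$. Then $S\setminus\{s\}$ is a proper subset of $S$, so by minimality it is not a clique transversal. Hence some $K\in\mcK(G)$ satisfies $K\cap(S\setminus\{s\})=\emptyset$. By the observation above, $K\cap S=\{s\}$, so $K$ is a private clique of $s$.

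For the backward direction, assume every $s\in S$ has a private clique, and suppose for contradiction that some proper subset $S'\subsetneq S$ is a clique transversal. Choose $s\in S\setminus S'$ and let $K$ be a private clique of $s$. Then $K\cap S'\subseteq (K\cap S)\setminus\{s\}=\emptyset$, so $S'$ misses $K$, a contradiction. Therefore $S$ is minimal.

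There is no genuine obstacle here. The one point needing care is that minimality quantifies over all proper subsets, not only over single-vertex deletions. The backward direction handles this directly by picking any $s\in S\setminus S'$, so there is no need to argue that the clique-transversal property is monotone under taking supersets (although that fact is also immediate).
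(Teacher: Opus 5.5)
Your proof is correct and complete: the forward direction rightly needs minimality only against $S\setminus\{s\}$ together with $K\cap S\neq\emptyset$, and the backward direction handles arbitrary proper subsets $S'\subsetneq S$ by picking $s\in S\setminus S'$. The paper gives no proof of its own; it cites the lemma as folklore and refers to Bl\"asius et al.\ for the general hitting-set version. Your argument is exactly that standard one, and it never uses that the members of $\mathcal K(G)$ are cliques, so it proves the general statement for minimal hitting sets of any set family verbatim.
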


We now show the main theorem in this paper.

\begin{theorem}\label{thm:main}
\UCT{} is \textsf{W[1]}-hard when parameterized by $k$.
\end{theorem}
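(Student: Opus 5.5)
We reduce from \MCC{}: given a graph $H$ whose vertex set is partitioned into $V_1,\dots,V_k$, decide whether $H$ has a clique with exactly one vertex in each $V_i$. We build a graph $G$ and an integer $k'=f(k)$ such that $G$ has a minimal clique transversal of size at least $k'$ if and only if $H$ has a multicolored clique. The design has two kinds of gadgets.

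\textbf{Selection gadgets.} For each $i\in[k]$ there is a \emph{center} vertex $c_i$ and a set of \emph{choice} vertices, one per $v\in V_i$. The choice vertices of class $i$ together with $c_i$ form a clique $K_i$. We give $c_i$ no further neighbours, so $K_i$ is a maximal clique that every clique transversal must hit.

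\textbf{Verification structure.} Between different classes we encode non-adjacency of $H$: choice vertices $u\in V_i$ and $w\in V_j$ ($i\ne j$) are adjacent in $G$ exactly when $uw\notin E(H)$. Consequently, a set $S=\{v_1,\dots,v_k\}$ with one choice vertex per class is independent in $G$ exactly when it is a clique in $H$. In that case each $K_i$ is a private clique of $v_i$, and \cref{lem:private} gives minimality of $S$ once we know $S$ is a transversal.

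I will not pin down the exact numbers here. I expect to pad each class with a few extra private-clique vertices (pendant vertices attached to each choice vertex) and set $k'$ to be $k$ plus a function of $k$.

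The steps, in order, are as follows.
\begin{enumerate}
\item Describe $\mcK(G)$ explicitly. A maximal clique either contains some $c_i$, in which case it equals $K_i$, or it consists of choice vertices, at most one per class up to the in-class cliques, whose pairwise $H$-non-adjacencies form a clique in the complement of $H$.
\item \emph{Forward direction.} From a multicolored clique $\{v_1,\dots,v_k\}$ in $H$, build a transversal $S$ containing the $v_i$. Then verify, via \cref{lem:private}, that every element of $S$ has a private clique; for $v_i$ this is $K_i$, and for any padding vertices it is their pendant edges.
\item \emph{Backward direction.} Take a minimal clique transversal $S$ with $|S|\ge k'$. Show first that $S$ cannot contain two choice vertices of the same class without destroying privacy. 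Then show that any two selected vertices $v_i,v_j$ must be non-adjacent in $G$, because otherwise the maximal cliques containing both would leave one of them without a private clique. Hence the selected vertices form a multicolored clique in $H$.
\end{enumerate}
The construction has size polynomial in $|V(H)|$ and $k'$ depends only on $k$, so this is a parameterized reduction.

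\textbf{Main obstacle.} I expect the hardest part to be the backward direction, specifically bounding the structure of large minimal transversals. Minimal clique transversals in $G$ can a priori be large by picking many vertices inside one class, or by using cross-class maximal cliques as private cliques. My first attempt is to choose the padding and $k'$ so that a minimal transversal of size $\ge k'$ must use exactly one choice vertex per selection gadget. Concretely, each class contributes at most a fixed number of vertices to $S$, and this maximum is attained only when $S$ contains $c_i$'s replacement structure in a controlled way. Making this counting argument rigorous is where I expect most of the effort to lie. If it fails, I would refine the gadget by replacing each $c_i$ with a small clique of twins to force the count.
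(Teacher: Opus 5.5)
There is a genuine gap, and it is exactly the ``main obstacle'' you leave open: with your verification structure neither direction goes through. Start with the forward direction. $G$ has maximal cliques made only of choice vertices from several classes, and $\{v_1,\dots,v_k\}$ need not meet them. Every choice vertex lies in some $K_i$, so adding a vertex that hits such a clique destroys the privacy of $K_i$ for $v_i$. Pendant vertices lie in none of these cliques, so they cannot help. Concretely, let $k=2$ and let $H$ be a $6$-cycle whose two sides are the color classes. Every edge $v_1v_2$ is the middle edge of an induced path $y'v_1v_2x'$, so $x'y'\notin E(H)$ and $\{x',y'\}$ is a clique of $G$. Since $v_1y',x'v_2\in E(H)$, every maximal clique of $G$ containing $\{x',y'\}$ avoids $v_1,v_2,c_1,c_2$. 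Hence for no multicolored clique of $H$ is there a transversal in which $K_i$ is private for $v_i$. If you instead let pendant edges serve as private cliques, the backward direction collapses. The set of all choice vertices meets every $K_i$, every all-choice clique and every pendant edge, and each choice vertex keeps its pendant edge as a private clique. So $G$ always has a minimal clique transversal of size $|V(H)|$, which exceeds any $k'=f(k)$ on large inputs whether or not $H$ has a multicolored clique. Even without pendants, all-choice cliques can act as private cliques, so your claim that $S$ holds at most one choice vertex per class is false.

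What is missing is a mechanism with two parts. First, it must bound $|S|$ outside a designated counting part by a function of the parameter, for \emph{every} minimal transversal. Second, it must force large transversals to choose their private cliques in a way that encodes a selection. The paper uses $r$ colors, input $G$ and constructed graph $H$, so the roles of $G$ and $H$ are swapped relative to yours. It replaces each center by an \emph{independent} set $C_i$ of $2r^2-r+1$ false twins. It encodes edges by vertices $x_e$ rather than by adjacencies between choice vertices, and makes $q_{i,v}$ adjacent to $C_i\cup A_i\cup(X_i\setminus X_i(v))$. Private-clique arguments (Lemma~\ref{lem:bound}) give $|S\cap Q_i|\le 1$ and $|S\cap X_{ij}|\le 2$. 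Hence $|S\setminus C|\le 2r^2-r<|C_i|$, so a transversal of size at least $r(2r^2-r+1)$ must meet every $C_i$. The chain of consequences is then as follows. Privacy of $c_i^{t}$ forces $S\cap A=\emptyset$. The anchor cliques $Y_{ij}$ then force $S\cap X_{ij}\neq\emptyset$. So $D_i^{t}$ is not private, and the private clique of $c_i^{t}$ must be some $W_{i,v_i}^{t}$. The selection is thus encoded by which $q_{i,v_i}$ lies in a private clique; that vertex is \emph{not} in $S$. Finally, $\emptyset\ne S\cap X_{ij}\subseteq X_i(v_i)\cap X_j(v_j)$ yields $v_iv_j\in E(G)$. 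Conversely, $S=C\cup\{x_{v_iv_j}\colon i<j\}$ works, with the $W_{i,v_i}^{t}$ and $Y_{ij}$ as private cliques. Your fallback of a \emph{clique} of twins cannot supply the counting. True twins lie in exactly the same maximal cliques, so at most one of them can belong to a minimal clique transversal.
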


We give a reduction from \MCC{}.
The input to this problem is a graph $G$, whose vertex set is partitioned into color classes, $V(G)=V_1\uplus\cdots\uplus V_r$, and the task is to determine whether we can form an $r$-clique by choosing one vertex from each color class.
This problem is \textsf{W[1]}-complete when parameterized by the number $r$ of colors~\cite{CyganEtAl2015}.
Without loss of generality, we assume that $r\ge 2$ and that each $V_i$ is an independent set.
We also assume that every $u\in V_i$ is adjacent to at least one vertex $v\in V_j$ for each $j\ne i$ (otherwise, $u$ cannot belong to any multicolored $r$-clique and can be deleted).

\paragraph{Construction.}
We construct an instance $H = (V(H), E(H))$ of \UCT{}.
We first define the vertex set of $H$, as follows.
We set $V(H)\coloneqq C\cup Q\cup A \cup X$, where
\[
\begin{aligned}
C_i&\coloneqq \{c_i^t\colon t\in[2r^2-r+1]\} &\textrm{for all }i\in [r],& \quad &C\coloneqq \bigcup_i C_i\,,\\
Q_i&\coloneqq \{q_{i,v}\colon v\in V_i\}&\textrm{for all }i\in [r],&\quad &Q\coloneqq \bigcup_iQ_i\,,\\
A_i&\coloneqq \{a_{ij}\colon j\in [r], j\neq i\}&\textrm{for all }i\in [r],&\quad &A\coloneqq \bigcup_iA_i\,,\\
X&\coloneqq \{x_e\colon e\in E(G)\}\,.&&&
\end{aligned}
\]
{As we will see later}, $C_i$ is an independent set introduced for color $i$, vertex $q_{i,v}$ represents a choice of an original vertex $v\in V_i$, and vertex $x_e$ represents an original edge.
The vertex $a_{ij}$ is an anchor for checking a pair of colors, and $a_{ij}$ and $a_{ji}$ are distinct vertices.
Furthermore, for $i,j\in [r], i\neq j$, let $X_{ij}=X_{ji}\coloneqq \{x_e\colon e\in E(V_i,V_j)\}$ be the set of edge vertices representing the original edges between colors $i$ and $j$, and define
\[
\begin{aligned}
X_i&\coloneqq \bigcup_{j\ne i}X_{ij} &&\textrm{for all }i\in [r],\\
X_i(v)&\coloneqq \{x_e\in X_i\colon v\in e\}&&\textrm{for all $i\in [r]$ and $v\in V_i$}\,.
\end{aligned}
\]
The set $X_i(v)$ consists of the vertices representing edges incident to $v\in V_i$ in the original graph,
and we have $X_i(v)\subseteq X_i=\bigcup_{v\in V_i}X_i(v)$.
For distinct vertices $u,v$ of the same color, $X_i(u)\cap X_i(v)=\emptyset$,
and for distinct colors $i,j$, $X_i\cap X_j=X_{ij}$ holds.
By assumption, each $X_i(v)$ and each $X_{ij}$ is nonempty.

The edges are defined by the following three rules.
{See Figures~\ref{fig:direct-local} and~\ref{fig:direct-pair} for a visualization of this.}
\begin{enumerate}[label=(E\arabic*)]
\item For each $i\in[r],t\in[2r^2-r+1]$,
make $D_i^{t}\coloneqq \{c_i^t\}\cup A_i\cup X_i$ a clique.
\item For each $v\in V_i$, make $q_{i,v}$ adjacent to exactly the vertices in
$C_i\cup A_i\cup(X_i\setminus X_i(v))$.
\item For each $i<j$, add the edge $a_{ij}a_{ji}$.
\end{enumerate}

This completes the construction of $H$.
{Clearly, this construction can be done in polynomial time.}
By definition, $C$ and $Q$ are independent sets.
Moreover, for each $i\in[r],v\in V_i,t\in[2r^2-r+1]$, define
\[W_{i,v}^t\coloneqq \{q_{i,v},c_i^t\}\cup A_i\cup(X_i\setminus X_i(v)),\]
and for each $i,j\in[r]$ with $i\ne j$, define
\[Y_{ij}\coloneqq \{a_{ij},a_{ji}\}\cup X_{ij}\,.\]
Then $W_{i,v}^t$ and $Y_{ij}$ are cliques.
Figure~\ref{fig:direct-local} illustrates the basic structure for color $i\in [r]$.
Figure~\ref{fig:direct-pair} illustrates the overlap between two colors.

\begin{figure}[!tbp]
\centering
\begin{tikzpicture}[x=0.98cm,y=0.82cm]
  \fill[UCTblue!4,rounded corners=2mm] (-5.12,-1.02) rectangle (6.50,1.02);
  \fill[UCTamber!6,rounded corners=2mm]
    (-5.12,-2.85)--(-3.25,-2.85)--(-3.25,-0.80)--(6.30,-0.80)
    --(6.30,0.80)--(-3.45,0.80)--(-3.45,-1.85)--(-5.12,-1.85)--cycle;
  \fill[UCTteal!5,rounded corners=2mm] (-7.02,-0.46) rectangle (3.15,0.46);
  \draw[UCTgray,dashed,rounded corners=2mm] (-6.85,-1.6) rectangle (-5.55,1.6);
  \draw[UCTgray,dashed,rounded corners=2mm] (-5.0,-2.78) rectangle (-3.7,0.6);
  \node[uctlabel,text=UCTgray] at (-6.5,-2.25) {$Q_i$\\independent set};
  \node[uctlabel,text=UCTgray] at (-4.35,-3.55) {$C_i$\\independent set};
  \node[uctlabel] at (-6.2,1.13) {$\vdots$};
  \node[uctlabel] at (-6.2,-1.13) {$\vdots$};
  \node[uctchoice] at (-6.2,0) {$q_{i,v}$};
  \node[uctcenter] at (-4.35,0) {$c_i^1$};
  \node[uctlabel] at (-4.35,-1.38) {$\vdots$};
  \node[uctcenter,draw=UCTamber,fill=UCTamber!9,text=UCTamber]
    at (-4.35,-2.35) {$c_i^t$};
  \node[uctbox,draw=UCTgray,fill=white,minimum width=1.55cm,minimum height=0.64cm]
    at (-2.35,0) {$A_i$};
  \draw[UCTpurple,rounded corners=1mm] (-1.15,-0.62) rectangle (6.15,0.62);
  \node[uctbox,draw=UCTpurple,fill=white,minimum width=3.85cm,minimum height=0.62cm]
    at (0.94,0) {$X_i\setminus X_i(v)$};
  \node[uctbox,draw=UCTpurple,fill=white,minimum width=2.30cm,minimum height=0.62cm]
    at (4.72,0) {$X_i(v)$};
  \node[uctlabel,text=UCTpurple] (XiLabel) at (0.7,-1.55) {$X_i$};
  \draw[UCTblue,line width=1.05pt,rounded corners=2mm]
    (-5.12,-1.02) rectangle (6.50,1.02);
  \draw[UCTamber,line width=1.05pt,rounded corners=2mm]
    (-5.12,-2.85)--(-3.25,-2.85)--(-3.25,-0.80)--(6.30,-0.80)
    --(6.30,0.80)--(-3.45,0.80)--(-3.45,-1.85)--(-5.12,-1.85)--cycle;
  \draw[UCTteal,line width=1.05pt,rounded corners=2mm]
    (-7.02,-0.46) rectangle (3.15,0.46);
  \node[uctlabel,text=UCTblue] (DiOneLabel) at (4.25,1.95) {$D_i^1=\{c_i^1\}\cup A_i\cup X_i$};
  \node[uctlabel,text=UCTteal] (WiOneLabel) at (-5.3,1.97) {$W_{i,v}^1=\{q_{i,v},c_i^1\}\cup A_i\cup\bigl(X_i\setminus X_i(v)\bigr)$};
  \draw[uctarrow,draw=UCTblue] (DiOneLabel.south)--(4.25,1.02);
  \draw[uctarrow,draw=UCTteal] (WiOneLabel.south)--(-5.3,0.46);
  \node[uctlabel,text=UCTamber] at (-.3,-2.30) {$D_i^t=\{c_i^t\}\cup A_i\cup X_i$};
  \draw[uctarrow,draw=UCTamber] (-2.4,-2.30)--(-3.25,-2.30);
  \draw[uctarrow,draw=UCTpurple] (XiLabel.north)--(0.7,-0.62);
  \draw[uctarrow,draw=UCTpurple] (4.72,-1.35)--(4.72,-0.4);
  \node[uctlabel,text=UCTpurple] at (4.55,-1.75) {nonadjacent to $q_{i,v}$};
\end{tikzpicture}
\caption{The local structure for color $i\in [r]$.
The blue, orange, and green outlines represent $D_i^1$, $D_i^t$, and $W_{i,v}^1$, respectively.
The purple outline represents $X_i$.
The sets $Q_i,C_i$ enclosed by dashed outlines are independent sets.}
\label{fig:direct-local}
\end{figure}

\begin{figure}[htbp]
\centering
\begin{tikzpicture}[x=1cm,y=0.86cm]
  \path[fill=UCTblue!4,draw=UCTblue,line width=1.05pt,rounded corners=2mm]
    (-4.55,1.4)--(1.85,1.4)--(1.85,-0.78)--(5.7,-0.78)
    --(5.7,3.25)--(-4.55,3.25)--cycle;
  \path[fill=UCTpurple!4,draw=UCTpurple,line width=1.05pt,rounded corners=2mm]
    (-4.55,-1.4)--(2.0,-1.4)--(2.0,0.78)--(5.55,0.78)
    --(5.55,-3.25)--(-4.55,-3.25)--cycle;
  \foreach \yy in {2.2,-2.2} {
    \draw[UCTgray,dashed,rounded corners=2mm] (-6.4,\yy-0.62) rectangle (-3.45,\yy+0.62);
    \draw[UCTgray,rounded corners=1mm] (-3.15,\yy-0.58) rectangle (-0.05,\yy+0.58);
  }
  \node[uctcenter] at (-5.87,2.2) {$c_i^1$};
  \node[uctlabel] at (-5.08,2.2) {$\cdots$};
  \node[uctcenter] at (-3.98,2.2) {$c_i^t$};
  \node[uctcenter] at (-5.87,-2.2) {$c_j^1$};
  \node[uctlabel] at (-5.08,-2.2) {$\cdots$};
  \node[uctcenter] at (-3.98,-2.2) {$c_j^t$};
  \node[uctlabel,text=UCTgray] at (-5.25,3.19) {$C_i$};
  \node[uctlabel,text=UCTgray] at (-5.25,-3.19) {$C_j$};
  \node[uctlabel,font=\scriptsize] at (-2.12,2.2) {$A_i\setminus\{a_{ij}\}$};
  \node[uctlabel,font=\scriptsize] at (-2.12,-2.2) {$A_j\setminus\{a_{ji}\}$};
  \node[uctlabel,text=UCTgray] at (-1.65,2.99) {$A_i$};
  \node[uctlabel,text=UCTgray] at (-1.65,-2.99) {$A_j$};
  \node[uctvertex,draw=UCTamber,fill=white,text=UCTamber] (anchorij) at (-0.65,2.2) {$a_{ij}$};
  \node[uctvertex,draw=UCTamber,fill=white,text=UCTamber] (anchorji) at (-0.65,-2.2) {$a_{ji}$};
  \draw[UCTamber,line width=1pt] (anchorij)--(anchorji);
  \node[uctcode,minimum width=2.90cm,minimum height=0.95cm] at (3.72,2.2) {$X_i\setminus X_{ij}$};
  \node[uctcode,minimum width=2.90cm,minimum height=0.95cm] at (3.72,-2.2) {$X_j\setminus X_{ij}$};
  \node[uctcode,fill=white,minimum width=2.90cm,minimum height=0.95cm] at (3.72,0) {$X_{ij}=X_i\cap X_j$};
  \path[draw=UCTamber,line width=1.2pt,rounded corners=2mm]
    (-1.15,2.88)--(-0.17,2.88)--(-0.17,0.65)--(5.31,0.65)
    --(5.31,-0.65)--(-0.17,-0.65)--(-0.17,-2.88)--(-1.15,-2.88)--cycle;
  \node[uctlabel,text=UCTamber] at (0.72,0) {$Y_{ij}$};
  \node[uctlabel,text=UCTblue] at (-0.1,3.65) {$D_i^t=\{c_i^t\}\cup A_i\cup X_i$};
  \node[uctlabel,text=UCTpurple] at (-0.1,-3.65) {$D_j^t=\{c_j^t\}\cup A_j\cup X_j$};
\end{tikzpicture}
\caption{The overlap of cliques between two colors.
The blue outline $D_i^t$ and the purple outline $D_j^t$ share $X_{ij}$.
The orange outline represents $Y_{ij}=\{a_{ij},a_{ji}\}\cup X_{ij}$.}
\label{fig:direct-pair}
\end{figure}

\begin{lemma}\label{lem:structure}
The following statements hold for the graph $H$.
\begin{enumerate}[label=(\alph*)]
\item $D_i^{t},W_{i,v}^t,Y_{ij}$ are maximal cliques.
Every maximal clique containing $q_{i,v}$ is $W_{i,v}^t$ for some $t\in[2r^2-r+1]$,
and every maximal clique containing $c_i^t$ is either $D_i^{t}$ or $W_{i,v}^t$ for some $v\in V_i$.
\item Every maximal clique that contains no vertex of $C\cup Q$ and contains the anchor $a_{ij}$ is $Y_{ij}$.
\item If a maximal clique $K$ satisfies $K\cap Q = \emptyset$ and $K\cap X_{ij} \neq \emptyset$, then $X_{ij}\subseteq K$.
\end{enumerate}
\end{lemma}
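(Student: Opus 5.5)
The plan is to first write down the neighbourhood of every vertex of $H$ explicitly, and then read off each statement from these neighbourhoods. From (E1)--(E3):
\begin{itemize}
\item $N(c_i^t)=A_i\cup X_i\cup Q_i$.
\item $N(q_{i,v})=C_i\cup A_i\cup(X_i\setminus X_i(v))$.
\item $N(a_{ij})=C_i\cup Q_i\cup(A_i\setminus\{a_{ij}\})\cup X_i\cup\{a_{ji}\}$.
\end{itemize}
A vertex $x_e$ with $e\in E(V_i,V_j)$ lies in $X_i\cap X_j$. Hence it is adjacent to $C_i\cup C_j\cup A_i\cup A_j\cup X_i\cup X_j$ (minus itself), and to those $q_{i,u}$ and $q_{j,w}$ with $u,w\notin e$. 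The one point needing care is that two edge vertices $x_e,x_f$ are adjacent if and only if they share a color class $X_i$. Also, $a_{ij}$ and $a_{kl}$ with $k\ne i$ are adjacent only when $(k,l)=(j,i)$.

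\emph{Proof of (a).} To see that $D_i^t$ is maximal, I check that no outside vertex is adjacent to all of it:
\begin{itemize}
\item $c_i^s$ with $s\ne t$ misses $c_i^t$, since $C_i$ is independent.
\item $q_{i,v}$ misses $X_i(v)\neq\emptyset$.
\item Vertices of other colors ($c_j$, $q_{j,\cdot}$, $a_{jl}$, or $x\in X_j\setminus X_i$) miss $c_i^t$.
\end{itemize}
For $W_{i,v}^t$, a vertex adjacent to both $q_{i,v}$ and $c_i^t$ lies in $A_i\cup (X_i\setminus X_i(v))$, so $W_{i,v}^t$ is maximal. For $Y_{ij}$, a common neighbour of $a_{ij}$ and $a_{ji}$ lies in $(C_i\cup Q_i\cup A_i\cup X_i)\cap(C_j\cup Q_j\cup A_j\cup X_j)=X_{ij}$.

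For the second part of (a), let $K$ be a maximal clique containing $q_{i,v}$. Then $K\subseteq \{q_{i,v}\}\cup C_i\cup A_i\cup(X_i\setminus X_i(v))$. Since $C_i$ is independent, $K$ contains at most one $c_i^t$. Because $W_{i,v}^t$ is a clique and every vertex of $A_i\cup(X_i\setminus X_i(v))$ is adjacent to every $c_i^t$, we get $K\subseteq W_{i,v}^t$ for some $t$, and maximality gives equality. Similarly, a maximal clique containing $c_i^t$ lies in $\{c_i^t\}\cup A_i\cup X_i\cup Q_i$ and contains at most one $q_{i,v}$, so it is contained in $W_{i,v}^t$ or in $D_i^t$.

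\emph{Proof of (b).} Let $K$ avoid $C\cup Q$ and contain $a_{ij}$. Then $K\subseteq\{a_{ij},a_{ji}\}\cup (A_i\setminus\{a_{ij}\})\cup X_i$.
\begin{itemize}
\item If $a_{ji}\in K$, then every other member of $K$ is a common neighbour of $a_{ij}$ and $a_{ji}$. By the computation above it lies in $X_{ij}$, so $K\subseteq Y_{ij}$ and hence $K=Y_{ij}$.
\item If $a_{ji}\notin K$, then $K\subseteq A_i\cup X_i\subsetneq D_i^1$, which is a clique. This contradicts maximality, so this case cannot occur.
\end{itemize}

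\emph{Proof of (c).} This is the step I expect to need the most care. Let $K\cap Q=\emptyset$ and $x_e\in K\cap X_{ij}$. I show every $y\in K$ is adjacent to every $x_f\in X_{ij}$, so that maximality forces $X_{ij}\subseteq K$. Since $y\in N[x_e]\setminus Q$, there are three cases.
\begin{itemize}
\item $y\in C_i\cup C_j\cup A_i\cup A_j$: then $y$ is adjacent to all of $X_i\cup X_j\supseteq X_{ij}$.
\item $y\in X_i\cup X_j$: $y$ shares a color with $x_f$, so they are adjacent (or $y=x_f$).
\item $y\in Q$: excluded by the hypothesis $K\cap Q=\emptyset$.
\end{itemize}
This case split is exactly where the hypothesis on $Q$ is used. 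Each $q$ sees only part of $X_{ij}$, and without the hypothesis the claim fails, since $W_{i,v}^t$ contains only the part of $X_{ij}$ not incident to $v$.
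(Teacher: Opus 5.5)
Your proof is correct and takes essentially the same route as the paper: explicit neighbourhoods and common neighbourhoods for (a) and (b), and for (c) a case check that amounts to the paper's observation that all vertices of $X_{ij}$ have the same closed neighbourhood $C_i\cup C_j\cup A_i\cup A_j\cup X_i\cup X_j$ in $H-Q$. One small slip is in the first case of (c): a vertex of $C_i\cup A_i$ is adjacent to all of $X_i$ but not necessarily to $X_j\setminus X_i$ (and symmetrically for $C_j\cup A_j$), so \emph{adjacent to all of $X_i\cup X_j$} is too strong, though adjacency to all of $X_{ij}=X_i\cap X_j$, which is all you need, still holds.
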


\begin{proof}
(a) By the construction of $H$, the neighborhood of $q_{i,v}$ is $A_i\cup(X_i\setminus X_i(v))\cup C_i$.
For each $t\in[2r^2-r+1]$, the set $A_i\cup(X_i\setminus X_i(v))\cup \{c^t_i\}$ forms a clique.
On the other hand, since $C_i$ is an independent set, every maximal clique containing $q_{i,v}$ must be
$W_{i,v}^t=\{q_{i,v},c_i^t\}\cup A_i\cup(X_i\setminus X_i(v))$.
Furthermore, since $N_H(q_{i,v})\cap N_H(c_i^t)=A_i\cup(X_i\setminus X_i(v))\subseteq W_{i,v}^t$,
no vertex can be added to $W_{i,v}^t$ to form a larger clique, and hence $W_{i,v}^t$ is maximal.

We also have $N_H(c_i^t)=A_i\cup X_i\cup Q_i$.
Any vertex that extends $D_i^t$ must be adjacent to $c_i^t$ and lie outside $D_i^t$,
and hence must belong to $N_H(c_i^t)\setminus D_i^t=Q_i$.
However, each $q_{i,v}\in Q_i$ is nonadjacent to every vertex of the nonempty set $X_i(v)\subseteq D_i^t$,
and thus cannot extend $D_i^t$.
Therefore, $D_i^t$ is maximal.

Next, consider an arbitrary maximal clique $K$ containing $c_i^t$.
Since every other vertex of $K$ is adjacent to $c_i^t$,
we have $K\subseteq\{c_i^t\}\cup A_i\cup X_i\cup Q_i$.
If $K\cap Q_i=\emptyset$, then $K\subseteq D_i^t$.
Since $D_i^t$ itself is a clique, if $K\subset D_i^t$, then a vertex of $D_i^t\setminus K$ can be added to $K$,
contradicting the maximality of $K$.
Thus, $K=D_i^t$.
On the other hand, if $K\cap Q_i\ne\emptyset$, then, as $Q_i$ is an independent set,
there is a unique $v\in V_i$ such that $K\cap Q_i=\{q_{i,v}\}$.
Since $q_{i,v}$ is nonadjacent to every vertex of $X_i(v)$, the clique $K$ cannot contain any vertex of $X_i(v)$.
Therefore,
\[
K\subseteq\{c_i^t,q_{i,v}\}\cup A_i\cup(X_i\setminus X_i(v))=W_{i,v}^t.
\]
Since $W_{i,v}^t$ itself is a clique, the maximality of $K$ implies $K=W_{i,v}^t$, as before.
Thus, every maximal clique containing $c_i^t$ is either $D_i^t$ or $W_{i,v}^t$.

By construction, $N_H(a_{ij})\cap N_H(a_{ji})=X_{ij}$.
Any vertex that extends $Y_{ij}$ must be adjacent to both anchors,
but their common neighborhood $X_{ij}$ is already contained in $Y_{ij}$.
Thus, no vertex outside $Y_{ij}$ can extend this clique, and hence it is maximal.

\par\medskip\noindent(b) Let $K$ be a maximal clique in $H$ that contains no vertex of $C\cup Q$ and contains the anchor $a_{ij}$.
Since $a_{ij}\in K$, every other vertex of $K$ is adjacent to $a_{ij}$.
The vertices outside $C\cup Q$ that are adjacent to $a_{ij}$ can only lie in $(A_i\setminus\{a_{ij}\})\cup X_i\cup\{a_{ji}\}$, and, hence, $K\subseteq A_i\cup X_i\cup\{a_{ji}\}$.
Thus, if $a_{ji}\notin K$, then $K\subseteq A_i\cup X_i$, and we can add $c_i^1\notin K$, contradicting maximality.
Therefore, $a_{ji}\in K$.
Since every other vertex of $K$ is adjacent to the two vertices $a_{ij},a_{ji}$, we have
\[K\setminus\{a_{ij},a_{ji}\}\subseteq N_H(a_{ij})\cap N_H(a_{ji})=X_{ij}\,.\]
Thus, $K\subseteq\{a_{ij},a_{ji}\}\cup X_{ij}=Y_{ij}$.
Since $Y_{ij}$ is a maximal clique, the maximality of $K$ implies $K=Y_{ij}$.

\par\medskip\noindent(c) Let $K$ be a maximal clique in $H$ satisfying $K\cap Q = \emptyset$ and $K\cap X_{ij} \neq \emptyset$.
The closed neighborhood of any $x\in X_{ij}$ in $H-Q$ is
\[N_{H-Q}[x]=C_i\cup C_j\cup A_i\cup A_j\cup X_i\cup X_j\,,\]
and, hence, all vertices of $X_{ij}$ have the same closed neighborhood in $H-Q$.
By the assumption that $K\cap X_{ij}\ne\emptyset$, we can choose a vertex $x\in K\cap X_{ij}$.
Since $K$ is a clique containing no vertex of~$Q$, we have $K\subseteq N_{H-Q}[x]$.
If there is a vertex $y\in X_{ij}\setminus K$, then the equality of the closed neighborhoods gives
$K\subseteq N_{H-Q}[x]=N_{H-Q}[y]$.
Since $y\notin K$, this means that $y$ is adjacent to every vertex of $K$,
and hence $K\cup\{y\}$ is a clique, contradicting the maximality of $K$.
Therefore, $X_{ij}\setminus K=\emptyset$, that is, $X_{ij}\subseteq K$.
\end{proof}

\begin{figure}[!tbp]
\centering  
\begin{tikzpicture}[
  x=1cm, y=1cm, font=\normalsize,
  selected/.style={circle, draw=LFourInk, fill=LFourInk,
    text=white, line width=.8pt, minimum size=10.5mm, inner sep=1pt},
  unselected/.style={circle, draw=LFourInk, fill=white,
    line width=.8pt, minimum size=10.5mm, inner sep=1pt},
  block/.style={draw=LFourGray, fill=white, rounded corners=3pt,
    line width=.75pt, minimum height=10.5mm, inner sep=5pt},
  leader/.style={-{Latex[length=1.7mm]}, line width=.75pt},
  every node/.append style={text=LFourInk}
]
\def\WuPath{(0,1.65) rectangle (14.58,3.68)}
\def\WvPath{
  (0,-.88) -- (3.98,-.88) -- (4.90,1.91)
  -- (11.61,1.91) -- (11.61,-.68) -- (14.59,-.68)
  -- (14.59,.93) -- (12.15,.93) -- (12.15,3.96)
  -- (4.28,3.96) -- (4.28,.96) -- (0,.96) -- cycle
}
\def\DPath{
  (2.26,1.40) -- (11.30,1.40) -- (11.30,-1.02)
  -- (14.95,-1.02) -- (14.95,4.27) -- (2.26,4.27) -- cycle
}
\path[fill=LFourBlue!5, rounded corners=24pt] \WuPath;
\path[fill=LFourOrange!7, fill opacity=.60, rounded corners=14pt] \WvPath;
\draw[draw=LFourBlue, line width=1.15pt, rounded corners=24pt] \WuPath;
\draw[draw=LFourOrange, line width=1.15pt, rounded corners=14pt] \WvPath;
\draw[draw=LFourGray, line width=.95pt,
      dash pattern=on 4pt off 2.5pt, rounded corners=8pt] \DPath;
\node[selected, text=white] (qu) at (.98,2.75) {$q_{i,u}$};
\node[unselected] (cu) at (3.10,2.75) {$c_i^{t_1}$};
\node[selected, text=white] (qv) at (.98,0) {$q_{i,v}$};
\node[unselected] (cv) at (3.10,0) {$c_i^{t_2}$};
\node[block, minimum width=12mm] (A) at (5.58,2.75) {$A_i$};
\node[block, minimum width=48mm] (R) at (8.98,2.75)
  {$X_i\setminus\bigl(X_i(u)\cup X_i(v)\bigr)$};
\node[block, minimum width=14.5mm] (Xv) at (13.35,2.75) {$X_i(v)$};
\node[block, minimum width=14.5mm] (Xu) at (13.35,.12) {$X_i(u)$};

\draw[draw=UCTpurple, line width=1pt, rounded corners=5pt]
  (6.35,3.48) -- (14.38,3.48) -- (14.38,-.53)
  -- (12.35,-.53) -- (12.35,2.03) -- (6.35,2.03) -- cycle;
\node[text=UCTpurple, fill=white, inner sep=2pt] at (7.05,3.48) {$X_i$};
\node[anchor=west, text=LFourBlue] at (0,4.94)
  {$W_{i,u}^{t_1}$};
\draw[leader, LFourBlue] (.78,4.61) -- (.78,3.69);
\node[anchor=east, inner xsep=0pt, text=LFourGray]
  (Ddefinition) at (14.83,4.94)
  {$D_i^{t_1}=\{c_i^{t_1}\}\cup A_i\cup X_i$};
\draw[leader, LFourGray, shorten <=2pt]
  (Ddefinition.south) -- (Ddefinition.south |- 0,4.27);

\node[anchor=west, text=LFourOrange] at (0,-1.59)
  {$W_{i,v}^{t_2}$};
\draw[leader, LFourOrange] (.78,-1.30) -- (.78,-.89);

\fill[LFourInk] (12.40,-1.59) circle (1.35mm);
\node[anchor=west, font=\small] at (12.5,-1.59) {$\in S$};
\draw[LFourInk, fill=white, line width=.75pt] (13.90,-1.59) circle (1.35mm);
\node[anchor=west, font=\small] at (14.0,-1.59) {$\notin S$};

\end{tikzpicture}
\caption{An illustration of the proof by contradiction of $|S\cap Q_i|\le1$ in \Cref{lem:bound}.
Since the blue and orange cliques are private, the clique $D_i^{t_1}$ enclosed by the dashed outline
does not intersect $S$, contradicting the clique transversal property.}
\label{lem:bound-fig}
\end{figure}

\begin{lemma}\label{lem:bound}
Let $S$ be a minimal clique transversal in $H$.
Then
\begin{enumerate}[label=(\alph*)]
\item $|S\cap Q_i|\le 1$ for all $i\in [r]$,
    \item $|S\cap X_{ij}|\le 2$ for all $i,j\in [r]$, $i\neq j$,
    \item $|S\setminus C|\le 2r^2-r$.
\end{enumerate}
\end{lemma}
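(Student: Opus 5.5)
The plan is to prove the three parts in order: (a) and (b) argue via private cliques (\Cref{lem:private}) together with the description of maximal cliques in \Cref{lem:structure}, and (c) is a counting consequence of (a) and (b).

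\emph{Part (a).} Suppose for contradiction that $q_{i,u},q_{i,v}\in S$ with $u\neq v$. By \Cref{lem:private} each has a private clique. By \Cref{lem:structure}(a), these cliques must be $W_{i,u}^{t_1}$ and $W_{i,v}^{t_2}$ for some $t_1,t_2$. Since $W_{i,u}^{t_1}\cap S=\{q_{i,u}\}$, the set $S$ avoids $c_i^{t_1}$, $A_i$ and $X_i\setminus X_i(u)$. Since $W_{i,v}^{t_2}\cap S=\{q_{i,v}\}$, the set $S$ also avoids $X_i\setminus X_i(v)$. Because $X_i(u)\cap X_i(v)=\emptyset$, the union of these two excluded parts of $X_i$ is all of $X_i$. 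Hence $S$ misses $D_i^{t_1}=\{c_i^{t_1}\}\cup A_i\cup X_i$. This maximal clique is therefore not hit, contradicting that $S$ is a clique transversal (this is the situation in \Cref{lem:bound-fig}).

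\emph{Part (b).} Suppose three distinct $x,y,z\in S\cap X_{ij}$, and let $K$ be a private clique of $x$. First, $K$ must meet $Q$: if $K\cap Q=\emptyset$, then \Cref{lem:structure}(c) gives $X_{ij}\subseteq K$, so $y\in K\cap S$, a contradiction. So $K$ contains some $q_{l,w}$, and by \Cref{lem:structure}(a) we have $K=W_{l,w}^t$. Since $x\in K\cap X_l$, we get $l\in\{i,j\}$, and $K\supseteq X_l\setminus X_l(w)$. Privacy then forces $y,z\in X_l(w)$, so the edges $y$ and $z$ share the endpoint $w$, while $x\notin X_l(w)$ means $w\notin x$. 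Call this vertex $w_x$, and define $w_y,w_z$ analogously for $y$ and $z$. Then $w_x,w_y\in z$, and they are distinct because $w_y\in x$ but $w_x\notin x$. So $z=w_xw_y$, and similarly $y=w_xw_z$ and $x=w_yw_z$. Thus $w_x,w_y,w_z$ form a triangle inside $E(V_i,V_j)$. Two of these three vertices lie in the same color class, which is independent, a contradiction. The main care needed is in this step: checking that the private clique is forced to be some $W_{l,w}^t$, and extracting the shared-endpoint structure.

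\emph{Part (c).} We have $S\setminus C\subseteq Q\cup A\cup X$. By (a), $|S\cap Q|\le r$. Trivially $|S\cap A|\le|A|=r(r-1)$. Since $X$ is the disjoint union of the sets $X_{ij}$ over the $\binom r2$ unordered pairs, (b) gives $|S\cap X|\le 2\binom r2=r(r-1)$. Summing,
\[
|S\setminus C|\le r+r(r-1)+r(r-1)=2r^2-r.
\]
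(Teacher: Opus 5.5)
Your proof is correct. Parts (a) and (c) follow the paper's argument essentially verbatim, but your part (b) takes a different route after the shared first step. That step is: a private clique of each of $x,y,z$ must meet $Q$ by Lemma~\ref{lem:structure}(c), hence is some $W_{l,w}^t$ with $l\in\{i,j\}$.

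The paper then applies pigeonhole to get two of these private cliques on the same side, $P_x=W_{i,u}^{t_1}$ and $P_y=W_{i,v}^{t_2}$, and splits into cases. If $u\ne v$, the two cliques jointly contain all of $X_i$ (as $X_i(u)\cap X_i(v)=\emptyset$), so $z$ lies in one of them. If $u=v$, they agree on $X_i$, so $x\in P_y$.

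You instead read each privacy condition back in $G$. The private clique of $p\in\{x,y,z\}$ gives a witness vertex $w_p$ that lies on the two other selected edges but not on $p$. The three witnesses then force $x,y,z$ to be the edges of a triangle $w_xw_yw_z$ inside the bipartite edge set $E(V_i,V_j)$, which is impossible.

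Your version is fully symmetric and avoids both the pigeonhole step and the case distinction. The price is that it uses all three private cliques and argues about endpoints of edges in $G$. The paper's version stays inside the set structure of $H$ and needs only two private cliques on a common side. Both rest on the same fact: each edge of $E(V_i,V_j)$ has exactly one endpoint in each of $V_i$ and $V_j$. For the final contradiction you only need this bipartiteness, not the independence of $V_i$ that you invoke.
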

\begin{proof}
(a) \Cref{lem:bound-fig} illustrates the proof of $|S\cap Q_i|\le 1$.
Suppose that $q_{i,u},q_{i,v}\in S$ for distinct $u,v\in V_i$.
By Lemma~\ref{lem:structure}(a), their private cliques are $W_{i,u}^{t_1},W_{i,v}^{t_2}$ for some $t_1,t_2\in [2r^2-r+1]$.
Since $X_i(u)\cap X_i(v)=\emptyset$, we have
\[A_i\cup X_i\subseteq \bigl(W_{i,u}^{t_1}\setminus\{q_{i,u}\}\bigr)\cup \bigl(W_{i,v}^{t_2}\setminus\{q_{i,v}\}\bigr)\,.\]
By the definition of a private clique,
\[W_{i,u}^{t_1}\cap S=\{q_{i,u}\},\qquad
W_{i,v}^{t_2}\cap S=\{q_{i,v}\}\,.\]
Thus, the set obtained from each clique by removing the corresponding $q$ vertex is disjoint from $S$.
By the above inclusion, $(A_i\cup X_i)\cap S=\emptyset$.
Moreover, since $c_i^{t_1}\in W_{i,u}^{t_1}\setminus\{q_{i,u}\}$, the same private-clique property implies $c_i^{t_1}\notin S$.
Together with $D_i^{t_1}=\{c_i^{t_1}\}\cup A_i\cup X_i$, this gives
\[D_i^{t_1}\cap S
=\bigl(\{c_i^{t_1}\}\cap S\bigr)\cup\bigl((A_i\cup X_i)\cap S\bigr)
=\emptyset\,.\]
However, by Lemma~\ref{lem:structure}(a), $D_i^{t_1}$ is a maximal clique,
and the clique transversal $S$ must intersect it.
This is a contradiction. 

\par\medskip\noindent(b)
Suppose that there are distinct vertices $x,y,z\in S\cap X_{ij}$.
Their private cliques intersect $Q$.
Indeed, suppose that a maximal clique $K$ containing $x$ does not intersect $Q$.
By Lemma~\ref{lem:structure}(c), $K$ contains all vertices of $X_{ij}$, and hence also contains $y,z$.
Since $y,z\in S$, we cannot have $K\cap S=\{x\}$,
and hence $K$ is not a private clique of $x$.
Thus, every private clique of $x$ must intersect $Q$.
The same holds for $y,z$.
For each $p\in\{x,y,z\}$, denote its private clique by $P_p$.
Since $P_p\cap Q\ne\emptyset$, we can choose a vertex $q_{k,u}\in P_p\cap Q$ (with $k\in [r]$ and $u\in V_k$).
The vertices $p$ and $q_{k,u}$ are adjacent as they belong to the same clique.
By construction, no vertex of $X_{ij}$ is adjacent to a vertex of $Q$ outside $Q_i\cup Q_j$,
and hence $k\in\{i,j\}$.
Furthermore, by Lemma~\ref{lem:structure}(a), every maximal clique containing $q_{k,u}$ is of the form $W_{k,u}^t$ for some $t\in [2r^2-r+1]$.
Therefore, each $P_p$ is a $W$ clique on either the $i$ side or the $j$ side.
By the pigeonhole principle and the symmetry between $i,j$, after relabelling $x,y,z$ and interchanging $i,j$ if necessary,
we may write the private cliques of $x,y$ as
\[P_x=W_{i,u}^{t_1},\qquad P_y=W_{i,v}^{t_2}\,.\]
We distinguish two cases, $u\ne v$ and $u=v$ (see Figure~\ref{fig:bound-x-private}).

\smallskip
Assume first that $u\ne v$.
Since an original edge has only one endpoint of color $i$, we have $X_i(u)\cap X_i(v)=\emptyset$.
Since $P_x\cap X_i=X_i\setminus X_i(u)$ and $P_y\cap X_i=X_i\setminus X_i(v)$, we have
\[(P_x\cup P_y)\cap X_i
=(X_i\setminus X_i(u))\cup(X_i\setminus X_i(v))=X_i\,.\]
Thus, $z\in S\cap X_{ij}\subseteq S\cap X_i$ belongs to at least one of $P_x,P_y$.
This contradicts either $P_x\cap S=\{x\}$ or $P_y\cap S=\{y\}$.

\smallskip
Assume now that $u=v$.
By the definition of $W$ cliques and the equality $u=v$, we have
\[P_x\cap X_i=P_y\cap X_i=X_i\setminus X_i(u)\,.\]
Since $P_x$ is a private clique of $x$, we have $x\in P_x$,
and we also have $x\in X_{ij}\subseteq X_i$.
Thus, the above equality gives $x\in P_y\cap X_i$.
Moreover, since $x\in S$, we have $x\in P_y\cap S$.
However, $P_y$ is a private clique of $y$, and hence $P_y\cap S=\{y\}$,
contradicting $x\ne y$.

\smallskip
Both cases lead to a contradiction, and hence $|S\cap X_{ij}|\le2$.

\par\medskip\noindent(c)
It remains to show that $|S\setminus C|\le 2r^2-r$.
By construction, $V(H)\setminus C$ is partitioned into $Q,A,X$.
Moreover, $Q$ is partitioned into the pairwise disjoint sets $Q_i$ $(i\in[r])$,
and $X$ is partitioned into the pairwise disjoint sets $X_{ij}$ $(1\le i<j\le r)$.
Therefore, by the bounds $|S\cap Q_i|\le1$ and $|S\cap X_{ij}|\le2$ proved above,
together with $|S\cap A|\le|A|=r(r-1)$, we obtain
\[
\begin{aligned}
|S\setminus C|
&=\sum_{i=1}^r|S\cap Q_i|+\sum_{1\le i<j\le r}|S\cap X_{ij}|+|S\cap A|\\
&\le r+2\binom r2+r(r-1)=2r^2-r.
\end{aligned}
\]
This completes the proof of the lemma.
\end{proof}

\begin{figure}[!tbp]
\centering
\begin{tikzpicture}[
  x=1cm,y=1cm,font=\small,
  unselected/.style={circle,draw=LFourInk,fill=white,text=LFourInk,
    line width=.85pt,minimum size=10.5mm,inner sep=1pt},
  selected/.style={circle,draw=LFourInk,fill=LFourInk,text=white,
    line width=.8pt,minimum size=6.8mm,inner sep=1pt},
  block/.style={draw=LFourGray,fill=white,rounded corners=3pt,
    line width=.75pt,inner sep=4pt},
  region/.style={line width=1.15pt},
  leader/.style={-{Latex[length=1.8mm]},line width=.8pt},
  note/.style={align=center,text=LFourInk}
]
\begin{scope}
  \node[anchor=west,font=\normalsize,inner xsep=0pt] at (0,5.2)
    {(a) Case $u\ne v$: the two private cliques cover all of $X_i$};

  \def\PxPath{(0,1.22) rectangle (14.75,3.65)}
  \def\PyPath{
    (0,-1.22) -- (3.65,-1.22) -- (4.12,1.43)
    -- (11.20,1.43) -- (11.20,-1.22) -- (14.72,-1.22)
    -- (14.72,.99) -- (11.67,.99) -- (11.67,3.92)
    -- (3.98,3.92) -- (3.98,.92) -- (0,.92) -- cycle
  }
  \path[fill=LFourBlue!5,rounded corners=16pt] \PxPath;
  \path[fill=LFourOrange!6,fill opacity=.65,rounded corners=12pt] \PyPath;
  \draw[LFourBlue,region,rounded corners=16pt] \PxPath;
  \draw[LFourOrange,region,rounded corners=12pt] \PyPath;
  \node[unselected] at (.85,2.50) {$q_{i,u}$};
  \node[unselected] at (2.85,2.50) {$c_i^{t_1}$};
  \node[unselected] at (.85,-.10) {$q_{i,v}$};
  \node[unselected] at (2.85,-.10) {$c_i^{t_2}$};

  \node[block,minimum width=12mm,minimum height=11mm]
    at (4.88,2.50) {$A_i$};
  \node[block,minimum width=46mm,minimum height=15mm]
    at (8.35,2.50) {};
  \node[note] at (8.35,2.92) {$X_i\setminus\bigl(X_i(u)\cup X_i(v)\bigr)$};
  \draw[UCTteal,fill=UCTteal!8,dashed,line width=1pt,rounded corners=3pt]
    (6.35,1.84) rectangle (10.35,2.60);
  \node[text=UCTteal,font=\scriptsize] at (8.35,2.22) {$X_{ij}\setminus\bigl(X_i(u)\cup X_i(v)\bigr)$};

  \node[block,minimum width=20mm,minimum height=15mm]
    at (13.20,2.50) {};
  \node[note] at (13.20,2.92) {$X_i(v)$};
  \draw[UCTteal,fill=UCTteal!8,dashed,line width=1pt,rounded corners=3pt]
    (12.35,1.84) rectangle (14.05,2.60);
  \node[selected] at (13.20,2.22) {$x$};
  \node[block,minimum width=20mm,minimum height=15mm]
    at (13.20,-.10) {};
  \node[note] at (13.20,.32) {$X_i(u)$};
  \draw[UCTteal,fill=UCTteal!8,dashed,line width=1pt,rounded corners=3pt]
    (12.35,-.76) rectangle (14.05,0);
  \node[selected] at (13.20,-.38) {$y$};

  \draw[UCTpurple,line width=1pt,rounded corners=5pt]
    (5.90,3.41) -- (14.43,3.41) -- (14.43,-.96)
    -- (12.00,-.96) -- (12.00,1.59) -- (5.90,1.59) -- cycle;
  \node[text=UCTpurple,fill=white,inner sep=2pt]
    at (6.30,3.41) {$X_i$};
 \node[align=center,text=UCTteal,fill=white,inner sep=4pt] (aXij) at (8.35,.10)
    {{\Large $X_{ij}$}};
  \draw[leader,UCTteal] (aXij.north) -- (8.35,1.84);
  \draw[leader,UCTteal] (aXij.north east) -- (12.35,1.95);
  \draw[leader,UCTteal] (aXij.east) -- (12.35,-.38);

  \node[text=LFourBlue] (aPx) at (1.60,4.45)
    {$P_x=W_{i,u}^{t_1}$};
  \draw[leader,LFourBlue] (aPx.south) -- (1.60,3.65);
  \node[anchor=east,note,inner xsep=0pt] at (14.75,4.45)
    {$x,y,z\in S\cap X_{ij}$};
  \node[text=LFourOrange] (aPy) at (1.60,-1.90)
    {$P_y=W_{i,v}^{t_2}$};
  \draw[leader,LFourOrange] (aPy.north) -- (1.60,-1.22);
\end{scope}

\begin{scope}[yshift=-8.7cm]
  \node[anchor=west,font=\normalsize,inner xsep=0pt] at (0,5.2)
    {(b) Case $u=v$: the two private cliques coincide within $X_i$};

  \def\PxPath{(0,1.22) rectangle (11.10,3.65)}
  \def\PyPath{
    (.12,1.00) -- (1.70,1.00) -- (1.94,-1.00)
    -- (3.76,-1.00) -- (4.05,1.45) -- (11.31,1.45)
    -- (11.31,3.92) -- (4.12,3.92) -- (4.12,1.52)
    -- (1.74,1.52) -- (1.74,3.84) -- (.12,3.84) -- cycle
  }
  \path[fill=LFourBlue!5,rounded corners=16pt] \PxPath;
  \path[fill=LFourOrange!6,fill opacity=.65,rounded corners=12pt] \PyPath;
  \draw[LFourBlue,region,rounded corners=16pt] \PxPath;
  \draw[LFourOrange,region,rounded corners=12pt] \PyPath;

  \node[unselected] at (.85,2.50) {$q_{i,u}$};
  \node[unselected] at (2.85,2.50) {$c_i^{t_1}$};
  \node[unselected] at (2.85,-.10) {$c_i^{t_2}$};
  \node[block,minimum width=12mm,minimum height=11mm]
    at (4.88,2.50) {$A_i$};

  \node[block,minimum width=46mm,minimum height=15mm]
    at (8.35,2.50) {};
  \node[note] at (8.35,2.92) {$X_i\setminus X_i(u)$};
  \node[selected] at (7.55,2.22) {$x$};
  \node[selected] at (9.15,2.22) {$y$};
  \node[block,minimum width=20mm,minimum height=15mm]
    at (13.20,2.50) {$X_i(u)$};

  \draw[UCTpurple,line width=1pt,rounded corners=5pt]
    (5.90,1.59) rectangle (14.43,3.41);
  \node[text=UCTpurple,fill=white,inner sep=2pt]
    at (6.30,3.41) {$X_i$};

  \node[text=LFourBlue] (bPx) at (2.85,4.45)
    {$P_x=W_{i,u}^{t_1}$};
  \draw[leader,LFourBlue] (bPx.south) -- (2.85,3.65);
  \node[text=LFourOrange] (bPy) at (2.85,-1.70)
    {$P_y=W_{i,u}^{t_2}$};
  \draw[leader,LFourOrange] (bPy.north) -- (2.85,-1.00);
\end{scope}

\fill[LFourInk] (11.75,-10.0) circle (1.35mm);
\node[anchor=west,note] at (11.95,-10.0) {$\in S$};
\draw[LFourInk,fill=white,line width=.8pt]
  (13.45,-10.0) circle (1.35mm);
\node[anchor=west,note] at (13.65,-10.0) {$\notin S$};
\end{tikzpicture}
\caption{An illustration of the proof by contradiction of $|S\cap X_{ij}|\le2$.
In (a), the three parts enclosed by green dashed outlines represent the intersections of $X_{ij}$ with the respective regions;
their union is $X_{ij}$ (some parts may be empty).
We have $x,y\in X_{ij}$, and the existence of $z\in X_{ij}$ contradicts the private-clique property.
In (b), $P_x$ and $P_y$ coincide within $X_i$, so $x$ also belongs to $P_y$,
contradicting the private-clique property of $P_y$.}
\label{fig:bound-x-private}
\end{figure}

\begin{lemma}
\label{lem:correctness}
There is a multicolored clique $\{v_1,\ldots,v_r\}$ in $G$ if and only if there is a minimal clique transversal $S$ in $H$ with $|S|\ge r(2r^2-r+1)$.
\end{lemma}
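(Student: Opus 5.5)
The plan is to prove the two directions separately. In both, the maximal cliques of $H$ are controlled through Lemma~\ref{lem:structure}, and minimality is handled through Lemma~\ref{lem:private}.

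\emph{($\Rightarrow$)} Let $\{v_1,\dots,v_r\}$ be a multicolored clique with $v_i\in V_i$, and for $i\ne j$ let $e_{ij}=v_iv_j\in E(V_i,V_j)$. I will take
\[S\coloneqq C\cup\{x_{e_{ij}}\colon 1\le i<j\le r\}.\]
Then $|S|=r(2r^2-r+1)+\binom r2\ge r(2r^2-r+1)$.

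To show that $S$ is a clique transversal, let $K$ be a maximal clique. If $K$ meets $C$, it is hit. If $K$ meets $Q$, then by Lemma~\ref{lem:structure}(a) it is some $W_{i,v}^t$, which contains $c_i^t\in S$. So assume $K\cap(C\cup Q)=\emptyset$.
\begin{itemize}
\item If $K$ contains an anchor $a_{ij}$, then $K=Y_{ij}$ by Lemma~\ref{lem:structure}(b), and $x_{e_{ij}}\in X_{ij}\subseteq Y_{ij}$.
\item Otherwise $K\subseteq X$. Since $K$ is nonempty, it meets some $X_{ij}$, so Lemma~\ref{lem:structure}(c) gives $X_{ij}\subseteq K$, and again $x_{e_{ij}}\in K$.
\end{itemize}

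For minimality I exhibit private cliques.
\begin{itemize}
\item For $x_{e_{ij}}$, take $Y_{ij}$. It contains no vertex of $C$, and exactly one selected edge vertex, because exactly one edge vertex of $X_{ij}$ was put in $S$.
\item For $c_i^t$, take $W_{i,v_i}^t$. It contains no other vertex of $C$, and $q_{i,v_i},A_i\notin S$. Every selected edge vertex in $X_i$ corresponds to an edge incident to $v_i$, so it lies in $X_i(v_i)$, which is excluded from $W_{i,v_i}^t$.
\end{itemize}
Lemma~\ref{lem:private} then shows that $S$ is minimal.

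\emph{($\Leftarrow$)} Let $S$ be a minimal clique transversal with $|S|\ge r(2r^2-r+1)=|C|$. By Lemma~\ref{lem:bound}(c), $|C\setminus S|\le |S\setminus C|\le 2r^2-r<|C_i|$. Hence for every $i$ there is some $c_i^{t_i}\in S$. By Lemma~\ref{lem:private} and Lemma~\ref{lem:structure}(a), its private clique $P_i$ is either $D_i^{t_i}$ or $W_{i,v}^{t_i}$ for some $v\in V_i$. Both contain $A_i$, so $S\cap A=\emptyset$.

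Consequently each maximal clique $Y_{ij}$ must be hit inside $X_{ij}$, so $S\cap X_{ij}\ne\emptyset$ for all $i\ne j$. In particular $S\cap X_i\ne\emptyset$. This rules out $P_i=D_i^{t_i}$, since $X_i\subseteq D_i^{t_i}$ and that clique meets $S$ only in $c_i^{t_i}$. Therefore $P_i=W_{i,v_i}^{t_i}$ for some $v_i\in V_i$, and privacy forces $S\cap(X_i\setminus X_i(v_i))=\emptyset$, that is, $S\cap X_i\subseteq X_i(v_i)$.

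Now take any $x_e\in S\cap X_{ij}$. Then $x_e\in X_i(v_i)\cap X_j(v_j)$, so $e$ is incident to both $v_i$ and $v_j$, hence $e=v_iv_j$. Since this holds for every pair $i\ne j$, the set $\{v_1,\dots,v_r\}$ is a multicolored clique.

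The step I expect to be most delicate is the transversal check in ($\Rightarrow$), specifically the claim that every maximal clique avoiding $C\cup Q$ is either some $Y_{ij}$ or a clique inside $X$ that contains a whole $X_{ij}$. Lemma~\ref{lem:structure}(b),(c) are exactly what cover this case.
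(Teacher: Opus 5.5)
Your proof is correct and follows the paper's argument essentially step for step. Necessity uses the same set $S=C\cup\{x_{v_iv_j}\}$ with private cliques $W_{i,v_i}^t$ and $Y_{ij}$, and sufficiency uses the same chain (every $C_i$ is hit, so $S\cap A=\emptyset$, so each $Y_{ij}$ is hit inside $X_{ij}$, so $D_i^{t_i}$ is not private and $S\cap X_i\subseteq X_i(v_i)$); the only cosmetic difference is that you phrase the counting directly as $|C\setminus S|\le|S\setminus C|\le 2r^2-r<|C_i|$ rather than as the paper's contradiction, which is the same estimate from Lemma~\ref{lem:bound}(c).
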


\begin{proof}
\textbf{Necessity.}
Suppose that $G$ has a multicolored clique $\{v_1,\ldots,v_r\}$ with $v_i\in V_i$.
For each pair of colors $i<j$, choose the corresponding edge vertex $x_{v_iv_j}$ and define
\[S\coloneqq C\cup\{x_{v_iv_j}\colon 1\le i<j\le r\}\,.\]
We first show that this set is a clique transversal and then show that it is minimal.

First, let $K$ be a maximal clique in $H$.
If $K$ contains a vertex of $C$, then $K\cap S\ne\emptyset$ since $C\subseteq S$.
If $K$ contains a vertex $q_{i,v}$ of $Q$, then, by Lemma~\ref{lem:structure}(a),
we have $K=W_{i,v}^t$, and hence $c_i^t\in K\cap S$.
Now consider the case where $K\cap(C\cup Q)=\emptyset$.
If $K$ contains an anchor $a_{ij}$, then, by Lemma~\ref{lem:structure}(b),
we have $K=Y_{ij}$ and $x_{v_iv_j}\in K\cap S$.
In the remaining case, $K\subseteq X$.
Since $K$ is nonempty, there is a pair of colors $i<j$ such that $K\cap X_{ij}\ne\emptyset$.
By Lemma~\ref{lem:structure}(c), we have $X_{ij}\subseteq K$,
and hence $x_{v_iv_j}\in K\cap S$ in this case as well.
Thus, $S$ intersects every maximal clique.

Next, we show that every vertex of $S$ has a private clique.
Fix an arbitrary color $i$ and consider the vertex $v_i\in V_i$ of the multicolored clique.
By the definition of $S$, the only vertices in $S\cap X_i$ are
$x_{v_iv_j}$ $(j\ne i)$.
For any color $j\ne i$, one endpoint of the edge $v_iv_j$ is this fixed vertex $v_i$.
By the definition of $X_i(v_i)$, each $x_{v_iv_j}$ belongs to $X_i(v_i)$.
Thus, $S\cap X_i\subseteq X_i(v_i)$.
Since $W_{i,v_i}^t=\{q_{i,v_i},c_i^t\}\cup A_i\cup(X_i\setminus X_i(v_i))$
and $S\cap(Q\cup A)=\emptyset$, we have
\[W_{i,v_i}^t\cap S=\{c_i^t\}\,.\]
Moreover, for each $i<j$, the set $S$ contains only $x_{v_iv_j}$ from $X_{ij}$, and hence
\[Y_{ij}\cap S=\{x_{v_iv_j}\}\,.\]
By Lemma~\ref{lem:structure}(a), all these cliques are maximal.
Thus, every vertex of $S$ has a private clique, and hence $S$ is minimal by Lemma~\ref{lem:private}.
Furthermore,
\[|S|=|C|+\binom r2\ge |C|= r(2r^2-r+1)\,.\]

\par\medskip\noindent\textbf{Sufficiency.}
Suppose that $H$ has a minimal clique transversal $S$ with $|S|\ge r(2r^2-r+1)$.
We first show that $S$ intersects every $C_i$.
If $S\cap C_i=\emptyset$ for some $i$,
then $|S\cap C|\le(r-1)(2r^2-r+1)$.
Together with Lemma~\ref{lem:bound}, this gives
\begin{equation}\label{eq:all-groups-hit}
\begin{aligned}
|S|&=|S\cap C|+|S\setminus C|\\
&\le(r-1)(2r^2-r+1)+(2r^2-r)\\
&=r(2r^2-r+1)-1<r(2r^2-r+1),
\end{aligned}
\end{equation}
contradicting the assumption. 
Thus, for each $i$, we can choose a vertex $c_i^{t_i}\in S\cap C_i$.

Since $S$ is minimal, each $c_i^{t_i}$ has a private clique by Lemma~\ref{lem:private}.
By Lemma~\ref{lem:structure}(a), this clique is either $D_i^{t_i}$ or
$W_{i,v}^{t_i}$ for some $v\in V_i$, and both contain all of $A_i$.
By definition, a private clique of $c_i^{t_i}$ contains no other vertex of $S$,
and hence $S\cap A_i=\emptyset$.
This holds for every $i$, and thus $S\cap A=\emptyset$.

For each pair of colors $i<j$, the set $Y_{ij}=\{a_{ij},a_{ji}\}\cup X_{ij}$
is a maximal clique by Lemma~\ref{lem:structure}(a).
The set $S$ must intersect it, but $a_{ij},a_{ji}\notin S$, and hence
\[S\cap X_{ij}\ne\emptyset\,.\]
Thus, for each $i$, choosing a color $j\ne i$,
we see that $D_i^{t_i}$ contains a vertex of $S\cap X_{ij}$ in addition to $c_i^{t_i}$.
Therefore, $D_i^{t_i}$ cannot be a private clique of $c_i^{t_i}$.
Hence, there is a vertex $v_i\in V_i$ such that
\[W_{i,v_i}^{t_i}\cap S=\{c_i^{t_i}\}\,.\]
Since $W_{i,v_i}^{t_i}$ contains all of $X_i\setminus X_i(v_i)$,
this equality implies $S\cap(X_i\setminus X_i(v_i))=\emptyset$, that is,
\[S\cap X_i\subseteq X_i(v_i)\,.\]

We have now chosen $v_i$ for each color $i$.
Fix an arbitrary pair of colors $i<j$.
Since $X_{ij}\subseteq X_i$ and $S\cap X_i\subseteq X_i(v_i)$, we have
\[S\cap X_{ij}\subseteq S\cap X_i\subseteq X_i(v_i)\,.\]
Similarly, for color $j$, we have
\[S\cap X_{ij}\subseteq S\cap X_j\subseteq X_j(v_j)\,.\]
Thus, $S\cap X_{ij}$ is contained in both $X_i(v_i)$ and $X_j(v_j)$.
Moreover, we have already shown that $S\cap X_{ij}\ne\emptyset$, and hence
\[\emptyset\ne S\cap X_{ij}\subseteq X_i(v_i)\cap X_j(v_j)\,.\]
The set $X_i(v_i)$ consists of vertices representing edges incident to $v_i$,
and $X_j(v_j)$ consists of vertices representing edges incident to $v_j$.
Thus, the existence of a vertex in their intersection means that the original graph has the edge $v_iv_j$.
Since this holds for every pair of colors, $\{v_1,\ldots,v_r\}$ is a multicolored clique.
This proves both directions.
\end{proof}

\begin{proof}[Proof of Theorem~\ref{thm:main}]
Taking $(H,r(2r^2-r+1))$ as the instance of \UCT{} gives a parameterized reduction,
whose correctness follows from Lemma~\ref{lem:correctness}.
\end{proof}

\paragraph{Use of generative AI.}
The proof of Theorem~\ref{thm:main} was developed from an initial reduction and a draft proof generated through interactions with ChatGPT (OpenAI; GPT-5.6 Sol).
ChatGPT (OpenAI; GPT-6 Astra) was also used to draft the introduction in accordance with the authors' instructions and to refine the exposition.
ChatGPT assisted with the preparation of LaTeX code and TikZ figures.
The authors take full responsibility for the final manuscript.

\bibliographystyle{plainurl} 
\bibliography{biblio.bib}
\end{document}